\documentclass[12pt,reqno,a4paper]{article}
\usepackage{amsmath,amssymb,dsfont,graphicx,latexsym,
epsf,cancel,epic,eepic,amsthm,tikz-cd,mathtools}
\usepackage[colorlinks=false,hidelinks]{hyperref}
\usepackage{mathpazo}
\usepackage[utf8]{inputenc}
\usepackage{tikz-cd}

\newtheorem{theorem}{Theorem}
\newtheorem{proposition}{Proposition}

\newcommand{\bbZ}{{\mathbb Z}}

\newcommand{\bbP}{{\mathbb P}}

\newcommand{\cD}{{\mathcal D}}

\newcommand{\cI}{{\mathcal I}}

\newcommand{\la}{{\langle}}
\newcommand{\ra}{{\rangle}}

\begin{document}

\title{Geometry of two- and three-dimensional integrable systems related to affine Weyl groups $W(E_8^{(1)})$ and $W(E_7^{(1)})$}
\author{Jaume Alonso \and Yuri B.\ Suris}
\date{February 23, 2026}
\maketitle

\begin{center}
{\small{
Institut f\"ur Mathematik, MA 7-1\\
Technische Universit\"at Berlin, Str. des 17. Juni 136,
10623 Berlin, Germany\\
E-mail: {\tt  jaume.alonsofernandez@tu-berlin.de, suris@math.tu-berlin.de}}}
\end{center}

\begin{abstract}
We find a general framework for the construction of birational involutions on two- and three-dimensional varieties obtained from $\mathbb P^2$, $\mathbb P^1\times \mathbb P^1$, and $\mathbb P^3$ by blow-up at nine, respectively eight points. Each such involution is based on a divisor class with a one-dimensional linear system with a generic element of genus zero. Classical Manin involutions represent the simplest particular case. Novel, more sophisticated cases identified here include birational involutions of $\mathbb P^2$ along conics and along nodal cubic curves, as well as birational involutions of $\mathbb P^3$ along quadratic cones and along Cayley nodal cubic surfaces. We prove a general formula for the induced action of geometric birational involutions on the respective Picard group, and give a general result about decomposition of translational elements of the respective affine Weyl group of symmetries into a product of two geometric birational involutions.
\end{abstract}

\section{Introduction}
\label{intro}
This paper continues our investigation of interrelations between constructions of classical algebraic geometry and discrete integrable systems, see \cite{PSWZ, W, ASW, ASW1, AS25}. The most well-known examples of geometric constructions we have in mind include Manin involutions for pencils of cubic curves \cite{Manin} and QRT maps based on pencils of biquadratic curves \cite{QRT1, QRT2, QRT book}. Discrete Painlev\'e equations were given an algebro-geometric interpretation and classification in the framework of the Sakai theory \cite{Sakai, KNY}, and in \cite{KMNOY, KMNOY1} they were related to constructions of the kind we speak about here. According to the Sakai theory, discrete Painlev\'e equations are birational maps of generalized Halphen surfaces which act on the Picard group as translational elements of the corresponding affine Weyl group of symmetries. For the most general such equation, the elliptic Painlev\'e equation, the symmetry group is $W(E_8^{(1)})$, and its translations are generated by $T_\alpha$ with the roots $\alpha=E_i-E_j$ and $\alpha=D-E_i-E_j-E_k$ (see section \ref{sect scenarios} for notations and further details). In \cite{KMNOY, KMNOY1}, a geometric construction was proposed for translations $T_{E_i-E_j}$, while finding a similar construction for $T_{D-E_i-E_j-E_k}$ remained open until our recent work \cite{AS25}, where it was settled with the help of some novel geometric involutions.

In the present work, empirical findings of \cite{AS25} generate an exhaustive conceptual explanation, which allows us to systematically derive novel geometric representations of two-dimensional integrable maps with the symmetry $W(E_8^{(1)})$ (autonomous versions of discrete Painlev\'e equations), as well as of three-dimensional integrable maps with the symmetry $W(E_7^{(1)})$ (autonomous versions of Takenawa's dynamical systems \cite{T}). The paper is organised as follows. In section \ref{sect scenarios}, we set up the scene by describing the three scenarios in which our theory works, namely special configurations of nine points in $\mathbb P^2$, of eight points in $\mathbb P^1\times\mathbb P^1$, and of eight points in $\mathbb P^3$. In section \ref{sec: involutions}, we define the divisor classes leading to geometric birational involutions, as well as give a detailed description of involutions themselves. In section \ref{sect action inv Picard} we prove an elegant formula for the induced action of those geometric involutions on the Picard group. In section \ref{sect translations}, we prove the general result about decomposition of translations from the affine Weyl group into products of reflections induced by geometric involutions. Some open questions and perspectives for the future work are listed in section \ref{sect conclusions}. Appendix \ref{App} contains some details of computations and formulas for the action of geometric involutions on the Picard group.

\section{Three scenarios}
\label{sect scenarios}

\paragraph{Scenario 1: Special configuration of nine points in $\mathbb P^2$.}

Here, $P_1,\ldots,P_9$ are nine points in $\mathbb P^2$, which are in special position in the sense that they support a pencil of cubic curves, but generic otherwise (no thee points collinear, no six points on a conic). We denote
\begin{equation}
X=Bl_{P_1,\ldots,P_9}(\mathbb P^2).
\end{equation}
The Picard group of $X$ is
\begin{equation}\label{Pic sc 1}
	\text{Pic}(X) = \bbZ D \oplus \bbZ E_1 \oplus \ldots \oplus \bbZ E_9,
\end{equation} 
where $D$ is the class of the total transform of a generic line, and $E_i$ is the class of the exceptional divisor of the blow-up at the point $P_i$. 
The intersection number defines a non-degenerate scalar product on $\text{Pic}(X)$, with the only non-vanishing products among the basis elements being
\begin{equation}
   \langle D ,D \rangle = 1,\qquad \langle E_i,E_i \rangle = -1, \quad i=1, \ldots,9.
\end{equation} 
We set 
\begin{equation}
	\delta =  -K_X=3D - E_1 - \ldots - E_9
\end{equation}
as the anti-canonical divisor of $X$, divisors of this class being the proper images of cubic curves of the pencil through $P_1,\ldots,P_9$ in $\mathbb P^2$. We have $\langle\delta,\delta\rangle=0$. We set
\begin{equation}\label{Q}
Q=\{\alpha\in \text{Pic}(X): \langle \alpha,\delta\rangle =0\},
\end{equation}
which one can directly check that it is a root lattice of the type $E_8^{(1)}$. The formula
\begin{equation}\label{R}
R=\{\alpha\in \text{Pic}(X): \langle \alpha,\delta\rangle =0, \;  \langle \alpha,\alpha\rangle =-2 \}.
\end{equation}
defines a root system of the affine type $E_8^{(1)}$. Any element of $R$ is equivalent $\pmod {\bbZ \delta}$ to one of the roots given by
\begin{equation}
	E_i - E_j \; (i < j) \quad \text{and} \quad D - E_i - E_j - E_k \; (i < j < k ),
\end{equation} 
or to one opposite of those.

\paragraph{Scenario 2: Special configuration of eight points in $\mathbb P^1\times \mathbb P^1$.}

Here, $P_1,\ldots,P_8$ are eight points in $\mathbb P^1\times \mathbb P^1$, which are in special position in the sense that they support a pencil of biquadratic curves, but generic otherwise. We set
\begin{equation}\label{Y}
X=Bl_{P_1,\ldots,P_8}(\bbP^1\times\bbP^1),
\end{equation}
with the Picard group 
\begin{equation}\label{Pic Y}
{\rm Pic}(X)=\bbZ H_1\oplus\bbZ H_2\oplus\bbZ E_1\oplus\ldots\oplus\bbZ E_8.
\end{equation}
Here $H_1$, $H_2$  are the divisor classes of the total transforms of a generic vertical, resp. horizontal line, while $E_i$ are exceptional divisors of the blow-up of $\bbP^1\times\bbP^1$ at $P_i$. The Minkowski scalar product (intersection number) has the only non-vanishing products of the basis elements 
\begin{equation}\label{scal prod X}
\langle H_1,H_2\rangle=1,\quad \langle E_i,E_i\rangle =-1, \;\; i=1,\ldots 8.
\end{equation}
This lattice is isomorphic (and isometric) to the previous one. We set
\begin{equation}\label{delta Y}
\delta=-K_X=2H_1+2H_2-E_1-\ldots-E_8,
\end{equation}
and define $Q$ and $R$ by the same formulas \eqref{Q} and \eqref{R} as before (these are still a root lattice and a root system of the type $E_8^{(1)}$). Some typical elements  of $R$ are:
\begin{equation}
H_1-H_2,\quad E_i-E_j, \quad H_1-E_i-E_j,\quad H_2-E_i-E_j, \quad H_1+H_2-E_i-E_j-E_k-E_\ell. \quad
\end{equation}

\paragraph{Scenario 3: Special configuration of eight points in $\mathbb P^3$.}
Here, $P_1,\ldots,P_8$ are eight points in $\mathbb P^3$, which are in special position in the sense that they support a two-dimensional family (a net) of quadrics, but generic otherwise (no four points coplanar, etc.). We denote
\begin{equation}
X=Bl_{P_1,\ldots,P_8}(\mathbb P^3).
\end{equation}
The Picard group of $X$ is
\begin{equation}
	\text{Pic}(X) = \bbZ \Pi \oplus \bbZ E_1 \oplus \ldots \oplus \bbZ E_8,
\end{equation} 
where $\Pi$ is the class of the total transform of a generic plane, and $E_i$ is the class of the exceptional divisor of the blow-up at the point $P_i$. 
We set
\begin{equation}
	\delta = -\frac{1}{2}K_X=2 \Pi - E_1 - \ldots - E_8
\end{equation}
(divisors of this class being the proper images of quadrics of the net through $P_1,\ldots,P_8$ in $\mathbb P^3$). 
We use the scalar product on the Picard group defined in \cite{T}, namely
\begin{equation*}
	\langle \cD,\cD' \rangle := \cD \cdot \cD' \cdot \delta,
\end{equation*} 
where for any $\cD,\cD',\cD'' \in \text{Pic}(X)$, $\cD \cdot \cD' \cdot \cD''$ denotes the intersection number $\text{Pic}(X) \times \text{Pic}(X) \times \text{Pic}(X) \to \bbZ$. The only non-zero products among the basis elements are
\begin{equation}
   \langle \Pi ,\Pi \rangle = 2,\quad \langle E_i,E_i \rangle = -1, \; i=1, \ldots,8.
\end{equation} 
With this scalar product, we have $\langle \delta,\delta\rangle=0$, and formulas \eqref{Q}, \eqref{R} define a root lattice, resp. a root system of the affine type $E_7^{(1)}$.  Any element of $R$ is equivalent$\pmod {\bbZ \delta}$ to one of the roots given by
\begin{equation*}
	E_i - E_j \; (i < j) \quad \text{and} \quad \Pi - E_i - E_j - E_k-E_\ell \quad (i < j < k < \ell),
\end{equation*} 
or to one opposite of those. 

\section{General construction of geometric birational involutions}
\label{sec: involutions}

We are interested in birational automorphisms of $X$ which induce the action of \emph{translational elements} of the Weyl group on $\text{Pic}(X)$.  Our main finding is a unified construction of geometric involutions, which is valid in all three scenarios, such that the translations from the Weyl group are induced by compositions of two involutions. The geometric involutions are related to certain divisor classes with one-dimensional linear systems. More precisely, we define
\begin{equation}
B=\{\beta\in\text{Pic}(X): \langle \beta,\beta\rangle=0,\; \langle \beta,\delta\rangle =2\}.
\end{equation}
It is well known (see, e.g., \cite[formula (3.33)]{KNY}) that
\begin{equation}
2d(\beta)=\langle\beta,\beta\rangle + \langle\delta,\beta\rangle , \quad
2g(\beta)-2=\langle\beta,\beta\rangle - \langle\delta,\beta\rangle,
\end{equation}
where $d(\beta)$ is the dimension of the linear system of the divisor class $\beta$, and $g(\beta)$ is the genus of a generic divisor of this class. Thus, relations $\langle \beta,\beta\rangle=0$ and $\langle \beta,\delta\rangle=2$ are equivalent to
\begin{equation}
d(\beta)=1, \quad g(\beta)=0,
\end{equation}
The relation $d(\beta)=1$ means that for any generic point $P$ (not one of the base points), there is exactly one divisor $\beta_P$ of the class $\beta$ through this point. 
\smallskip

{\bf Scenarios 1, 2:} There is exactly one divisor $\delta_P$ of the class $\delta$ through $P$. The intersection number of $\delta_P$ with $\beta_P$ is $\langle\delta,\beta\rangle=2$, so there is exactly one further intersection point $\widetilde P$ (different from $P$). 
\smallskip

{\bf Scenario 3:} There is a pencil (one-parameter) family of divisors of the class $\delta$ through $P$, spanned by $\delta_P^{(1)}$ and $\delta_P^{(2)}$, say. The triple intersection number of $\delta_P^{(1)}$, $\delta_P^{(2)}$ and $\beta_P$ is
$$
\delta\cdot\delta\cdot\beta=\langle \delta,\beta\rangle=2,
$$
so that again there is there is exactly one further intersection point $\widetilde P$ (different from $P$). 
\smallskip

In both cases we define the map $\cI_\beta$ by declaring
\begin{equation}
\cI_\beta(P)=\widetilde P.
\end{equation} 

There follows the list of concrete realizations of this general construction. For this, we present the computations of the intersection numbers in the non-blown-up spaces $\mathbb P^2$, $\mathbb P^1\times\mathbb P^1$, and $\mathbb P^3$, respectively. In each of the scenarios, we indicate some typical divisor classes from the set $B$. Some of the examples (those for which we do not provide references) seem to be new.
\medskip

{\bf Scenario 1.} For any $P\in\mathbb P^2\setminus\{P_1,\ldots,P_9\}$, we denote by $C_P$ the unique cubic curve of the pencil passing through $P$ (with the proper transform $\delta_P$), and by $L_P$ the unique curve of the class $\beta$ through $P$ (with the proper transform $\beta_P$).

\begin{itemize}
\item[a)] $\beta=D-E_i$, the class of lines through $P_i$. Here, $L_P$ is the line $(PP_i)$. It intersects $C_P$ at three points, two of them being  $P_i$ and $P$. The third one is $\widetilde P=:\cI_{\beta}(P)$. This is the classical {\em  Manin involution} \cite{Manin}.

\item[b)] $\beta=2D-E_i-E_j-E_k-E_\ell$, the class of conics through $P_i$, $P_j$, $P_k$, $P_\ell$. Thus, $L_P$ is the conic through the five points $P_i$, $P_j$, $P_k$, $P_\ell$ and $P$. It intersects $C_P$ at $2\cdot 3=6$ points, five of them being the ones just listed. The sixth one is $\widetilde P=:\cI_{\beta}(P)$. This is the involution along conics introduced in \cite{AS25}.

\item[c)] $\beta=3D-2E_i-E_j-E_k-E_\ell-E_m-E_n$, the class of cubic curves with a double point at $P_i$ and passing through further 5 points $P_j$, $P_k$, $P_\ell$, $P_m$, $P_n$. Thus, $L_P$ is the unique such curve through $P$. It intersects $C_P$ at $3\cdot 3=9$ points, eight of them being the ones just listed (with $P_i$ counted twice). The 9-th one is $\widetilde P=:\cI_{\beta}(P)$. 

\end{itemize}

{\bf Scenario 2.}  For any $P\in\mathbb P^1\times\mathbb P^1\setminus\{P_1,\ldots,P_8\}$, we denote by $C_P$ the unique biquadratic curve of the pencil passing through $P$ (with the proper transform $\delta_P$).
\begin{itemize}
\item[a)] $\beta=H_1$, the class of vertical lines. In particular, $L_P$ is the vertical line through $P$. It intersects $C_P$ at two points, one of them being $P$. The second one is $\widetilde P=:\cI_{\beta}(P)$. This is the \emph{vertical QRT switch}. The horizontal QRT switch is defined similarly, with $\beta=H_2$, the class of horizontal lines \cite{QRT1, QRT2, QRT book}.

\item[b)] $\beta=H_1+H_2-E_i-E_j$, the class of (1,1)-curves through $P_i$ and $P_j$. The unique such curve through $P$ is denoted $L_P$. It intersects $C_P$ at $1\cdot 2+1\cdot 2=4$ points, three of them being $P_i$, $P_j$ and $P$. The fourth one is $\widetilde P=:\cI_{\beta}(P)$. This the involution along (1,1)-curves introduced in \cite{AS25}.

\item[c)] $\beta=2H_1+H_2-E_i-E_j-E_k-E_\ell$, the class of (2,1)-curves through $P_i$, $P_j$, $P_k$, $P_\ell$. The unique such curve through $P$ is denoted $L_P$. It intersects $C_P$ at $2\cdot 2+1\cdot 2=6$ points, five of them being $P_i$, $P_j$, $P_k$, $P_\ell$ and $P$. The sixth one is $\widetilde P=:\cI_{\beta}(P)$. This the involution along (2,1)-curves introduced in \cite{AS25}. The involution along (1,2)-curves is defined similarly.

\item[d)] $\beta=2(H_1+H_2)-E_i-E_j-E_k-E_\ell-2E_m$, the class of (2,2)-curves passing through $P_i$, $P_j$, $P_k$, $P_\ell$ and having a double point at $P_m$. The unique such a curve through $P$ is denoted $L_P$. It intersects $C_P$ at $2\cdot 2+2\cdot 2=8$ points, seven of them being $P_i$, $P_j$, $P_k$, $P_\ell$, $P_m^2$ and $P$. The 8-th one is $\widetilde P=:\cI_{\beta}(P)$.

\item[e)] $\beta=3(H_1+H_2)-2E_i-2E_j-2E_k-2E_\ell-E_m-E_n$, the class of (3,3)-curves with double points at $P_i$, $P_j$, $P_k$, $P_\ell$ and simple points at $P_m$ and $P_n$. The unique such a curve through $P$ is denoted $L_P$. It intersects $C_P$ at $3\cdot 2+3\cdot 2=12$ points, eleven of them being $P_i^2$, $P_j^2$, $P_k^2$, $P_\ell^2$, $P_m$, $P_n$ and $P$. The 12-th one is $\widetilde P=:\cI_{\beta}(P)$.
\end{itemize}

{\bf Scenario 3.} For any $P\in\mathbb P^3\setminus\{P_1,\ldots,P_8\}$, the quadrics of the net through the point $P$ form a pencil, whose base set is a space curve of degree 4 (the intersection of any two quadrics of the pencil). We denote this base curve by $C_P$. We denote by $L_P$ the unique surface of the class $\beta$ through $P$.
\begin{itemize}

\item[a)] $\beta=\Pi-E_i-E_j$, the class of planes through $P_i$ and $P_j$. Here, $L_P$ is the unique plane through $P_i$, $P_j$ and $P$. It intersects $C_P$ at four points, three of them being $P_i$, $P_j$ and $P$. The fourth one is $\widetilde P=:\cI_{\beta}(P)$. This involution was described in the concluding remarks of \cite{ASW1}.

\item[b)] $\beta=2\Pi-E_i-E_j-E_k-E_\ell-2E_m$, the class of quadratic cones with the tip at $P_m$, passing through $P_i$, $P_j$, $P_k$, $P_\ell$. The unique such a cone through $P$ is denoted by $L_P$. It intersects $C_P$ at $2\cdot 4=8$ points, seven of them being $P_i$, $P_j$, $P_k$, $P_\ell$, $P_m^2$ and $P$. The 8-th one is $\widetilde P=:\cI_{\beta}(P)$.

\item[c)] $\beta=3\Pi-2E_i-2E_j-2E_k-2E_\ell-E_m-E_n$, the class of Cayley nodal cubic surfaces with double points at $P_i$, $P_j$, $P_k$, $P_\ell$ and simple points at $P_m$ and $P_n$. The unique such nodal Cayley cubic through $P$ is denoted $L_P$. It intersects $C_P$ at $3\cdot 4=12$ points, eleven of them being $P_i^2$, $P_j^2$, $P_k^2$, $P_\ell^2$, $P_m$, $P_n$ and $P$. The 12-th one is $\widetilde P=:\cI_{\beta}(P)$.
\end{itemize}

\section{Action of geometric birational involutions on the Picard group}
\label{sect action inv Picard}

\begin{theorem}\label{th I beta on Pic}
For $\beta\in B$, the birational involution $\cI_\beta$ induce the following linear involution on the Picard group $I_\beta=(\cI_\beta)_*: {\rm Pic}(X)\to{\rm Pic}(X)$ :
\begin{equation}\label{I beta on Pic}
I_\beta(\lambda)=-\lambda+\langle \lambda,\beta\rangle \delta + \langle \lambda,\delta\rangle \beta.
\end{equation}
\end{theorem}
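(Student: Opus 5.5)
Our plan is to establish the formula by combining its structural properties with a computation on a few specific classes. First we check that the right-hand side of \eqref{I beta on Pic} is consistent. Write $J(\lambda)=-\lambda+\langle\lambda,\beta\rangle\delta+\langle\lambda,\delta\rangle\beta$. Using $\langle\delta,\delta\rangle=0$, $\langle\beta,\delta\rangle=2$ and $\langle\beta,\beta\rangle=0$, a direct computation gives $J(\delta)=\delta$, $J(\beta)=\beta$ and $J^2=\mathrm{id}$. A further computation shows $\langle J\lambda,J\mu\rangle=\langle\lambda,\mu\rangle$. So $J$ is an isometric involution fixing $\delta$ and $\beta$ and acting as $-1$ on $\langle\delta,\beta\rangle^\perp$. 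These are exactly the properties that $(\cI_\beta)_*$ should have, which is what we now verify geometrically.

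The geometric input comes from the definition of $\cI_\beta$. The involution preserves each divisor $\delta_P$, respectively each curve $C_P$, and each $\beta_P$, since $\widetilde P$ lies on both. Hence $(\cI_\beta)_*$ fixes $\delta$ and $\beta$. Moreover, on a generic member $\delta_P$ (an elliptic curve in Scenarios 1 and 2, a quartic elliptic space curve $C_P$ in Scenario 3), the restriction $P\mapsto\widetilde P$ is determined by the condition that $P+\widetilde P$ is linearly equivalent to the fixed divisor $\beta|_{\delta_P}$, a degree-2 divisor on the curve. It is therefore the elliptic involution $x\mapsto -x+\beta|_{\delta_P}$ in the group law of that curve. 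So for any class $\lambda$, we get $(\cI_\beta)_*\lambda$ restricted to $\delta_P$ equal to $-\lambda|_{\delta_P}+\langle\lambda,\delta\rangle\,\beta|_{\delta_P}$ as divisor classes on the curve, up to the degree bookkeeping. This is the same as $J(\lambda)$ restricted to $\delta_P$, because $\delta|_{\delta_P}$ is trivial in the appropriate sense.

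To turn this into an identity on all of ${\rm Pic}(X)$, we will use the following plan. The restriction map from $\delta^\perp$ (modulo $\delta$) to ${\rm Pic}^0$ of a generic fibre is injective on the root lattice for generic point configurations. Alternatively, and we expect this to be more robust, we compute $(\cI_\beta)_*$ directly on the exceptional classes $E_i$. The involution $\cI_\beta$ is an automorphism of $X$ away from finitely many curves, so $(\cI_\beta)_*$ is a lattice isometry preserving $\delta$ and $\beta$. It is then determined by its values on a basis. For each base point $P_k$, we identify the curve that $\cI_\beta$ contracts onto $P_k$, that is, the curve whose image under the involution is $E_k$. Using the elliptic-involution description, a point $P$ is sent near $P_k$ exactly when $P$ lies on the member of class $\beta-E_k+\dots$ predicted by $J(E_k)=-E_k+\langle E_k,\beta\rangle\delta+\langle E_k,\delta\rangle\beta$. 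For instance, in Scenario 1a this gives $J(E_k)=-E_k+\delta+\beta$ for $k\neq i$, which is the class of the conic-type curve blown down, and it agrees with Manin's classical computation.

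The main obstacle is to justify rigorously that a birational map which restricts to $x\mapsto -x+\text{const}$ on each fibre of the elliptic fibration induces exactly $J$. This requires controlling the indeterminacy, in particular showing that no extra components appear, such as exceptional curves of the fibration or the $\beta_P$ through base points. Our approach is to avoid the case-by-case analysis by an isometry argument. Both $(\cI_\beta)_*$ and $J$ are isometries fixing $\delta$ and $\beta$. Both act on the Mordell–Weil-type quotient $\delta^\perp/\bbZ\delta$ via the same map, namely $-1$ composed with a translation by the class of $\beta$, as read off from the fibrewise group law. An isometry fixing $\delta$ is determined by this induced action together with its values on $\delta$ and one class of nonzero pairing with $\delta$, which we take to be $\beta$. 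In Scenario 3 the same argument is run with the fibres $C_P$ and the pencil structure.
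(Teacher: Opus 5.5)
Your route is genuinely different from the paper's. The paper verifies the formula on a spanning set: $\delta$, $\beta$, the $E_j$ with $\langle\beta,E_j\rangle=0$, and the $(-1)$-curves of splittings $\beta=\gamma_1+\gamma_2$, checked case by case in the appendix, with scenario~3 reduced to scenario~2 by Proposition~\ref{prop 2D to 3D}. You instead want a uniform argument from the fibrewise law $x\mapsto -x+{\rm const}$ plus the isometry property of $\phi=(\cI_\beta)_*$. The linear algebra in your last paragraph is fine. The gap is at the load-bearing step.

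The fibrewise description determines $\phi$ only modulo the kernel of restriction to the generic fibre. Granting that this kernel is $\bbZ\delta$ (all fibres irreducible, which you should state), you get $\phi(\lambda)=-\lambda+\langle\lambda,\delta\rangle\beta+f(\lambda)\delta$ with an unknown linear form $f$. Involutivity and $\phi(\beta)=\beta$, $\phi(\delta)=\delta$ do not fix $f$ on ${\rm span}(\beta,\delta)^\perp$. Only the isometry does: $\langle\phi\lambda,\phi\lambda\rangle=\langle\lambda,\lambda\rangle$ reads $2\langle\lambda,\delta\rangle\bigl(f(\lambda)-\langle\lambda,\beta\rangle\bigr)=0$. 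So the term $\langle\lambda,\beta\rangle\delta$ rests entirely on the isometry.

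Your justification of the isometry is that $\cI_\beta$ is an automorphism of $X$ away from finitely many curves. That does not suffice: the quadratic Cremona map is an automorphism of $\mathbb P^2$ off three lines, yet acts on ${\rm Pic}$ by $D\mapsto 2D$. Whether $\cI_\beta$ contracts curves of $X$ is exactly the indeterminacy problem you named as the main obstacle. Your isometry argument presupposes its solution rather than avoiding it.

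\emph{Scenarios 1 and 2.} Here the missing input is standard. By adjunction every $(-1)$-curve $C$ on $X$ has $C\cdot\delta=1$, so it is a section and $X\to\mathbb P^1$ is relatively minimal. Since $\cI_\beta$ preserves every fibre, uniqueness of relatively minimal models makes it a biregular automorphism of $X$, so $\phi$ is an isometry. You can then finish cleanly. $\phi$ fixes $\delta$ and $\beta$, so it preserves $W={\rm span}(\beta,\delta)^\perp$. This $W$ maps isomorphically onto $(\delta^\perp/\bbZ\delta)\otimes\mathbb Q$, where $\phi$ acts as $-1$. (It is not ``$-1$ composed with a translation''; the translation only shows up as $\lambda\mapsto-\lambda+\langle\lambda,\delta\rangle\beta$ on ${\rm Pic}(X)/\bbZ\delta$.) Hence $\phi|_W=-1$ and $\phi=J$, with no case-by-case work.

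\emph{Scenario 3.} Here ``the same argument'' does not suffice. $\cI_\beta$ is a birational self-map of a threefold, and the form $\cD\cdot\cD'\cdot\delta$ is not automatically preserved, even by maps that are isomorphisms in codimension one. You need three facts. First, each quadric $Q$ of the net is $\cI_\beta$-invariant, since every quadric through $P$ contains $C_P\ni\widetilde P$. Second, $\cI_\beta|_Q$ is the scenario-2 automorphism of $Bl_{P_1,\ldots,P_8}(Q)$. Third, $\langle\lambda,\mu\rangle$ equals the intersection of $\lambda|_Q$ and $\mu|_Q$ on the blown-up quadric. Together these are essentially Proposition~\ref{prop 2D to 3D}, after which scenario 3 follows from scenario 2 directly.

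Finally, your scenario 1a illustration is wrong. For $k\neq i$ one has $\langle E_k,\beta\rangle=0$, so $J(E_k)=D-E_i-E_k$, the line $(P_iP_k)$. The class $-E_i+\beta+\delta=4D-3E_i-\sum_{k\neq i}E_k$ is $J(E_i)$.
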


\subsection{Scheme of the proof for scenarios 1 and 2}

In this section, we extend the map $I_\beta$ by linearity from the Picard group \eqref{Pic sc 1} to the vector space
\begin{equation}\label{V}
V = \mathbb Q D \oplus \mathbb Q E_1 \oplus \ldots \oplus \mathbb Q E_9
\end{equation}
(and similarly in the scenarios 2 and 3). Since the right-hand side of \eqref{I beta on Pic} is linear with respect to $\lambda$, it suffices to verify the formula for the elements of some basis of $V$ (ten linearly independent elements). For two elements we have a general result:
\begin{equation}\label{I beta on delta}
I_\beta(\delta)=\delta, \quad I_\beta(\beta)=\beta, 
\end{equation}
since, by definition, all divisors of the classes $\delta$ and $\beta$ are invariant under $\cI_\beta$. This coincides with \eqref{I beta on Pic} for $\delta$ and $\beta$ in view of $\langle\beta,\delta\rangle=2$, $\langle\delta,\delta\rangle=0$, and $\langle\beta,\beta\rangle=0$. The two-dimensional space ${\rm span}(\beta,\delta)$ consists of fixed points of $I_\beta$.

Further, for any $j$ such that $\langle \beta, E_j\rangle =0$, we have:
\begin{equation}\label{I beta Ej}
I_\beta(E_j)=\beta-E_j.
\end{equation}
Indeed, by definition, $\cI_\beta$ blows down the divisor of the class $\beta$ passing through $P_j$ to the point $P_j$, and, since it is an involution, it blows up the point $P_j$ to the divisor of the class $\beta$ passing through $P_j$. We express this by saying that these $E_j$ and $\beta-E_j$ belong to
\begin{equation}\label{V1}
V_1=\big\{\lambda\in V: I_\beta(\lambda)=-\lambda+\beta\big\}.
\end{equation}
Formula \eqref{I beta Ej} coincides with \eqref{I beta on Pic} for $E_j$ and for $\beta-E_j$, since $\langle \beta,E_j\rangle=0$ and $\langle \delta,E_j\rangle=1$.

It turns out that in the scenarios 1 and 2 one can identify further elements of $V_1$. We will show, on the case-by-case basis (see details in Appendix \ref{App}), that there exist splittings
\begin{equation}\label{beta split}
\beta=\gamma_1+\gamma_2
\end{equation}
where $\gamma_1$, $\gamma_2$ are two $(-1)$-curves, so that $\langle\gamma_1,\gamma_1\rangle= \langle\gamma_2,\gamma_2\rangle=-1$, satisfying 
\begin{equation}\label{gamma prop}
\langle \beta,\gamma_1\rangle=\langle \beta,\gamma_2\rangle=0, \quad \langle \delta,\gamma_1\rangle=\langle \delta,\gamma_2\rangle=1. 
\end{equation}
For such a splitting, we derive: 
\begin{equation}\label{trick 1}
I_\beta(\gamma_1)=\gamma_2, \quad I_\beta(\gamma_2)=\gamma_1.
\end{equation}
Indeed, by definition, for any curve of the class $\beta$, the image $\widetilde{P}=I_\beta(P)$ belongs to the unique curve $L_P$ of this class through $P$. We slightly abuse notation by writing $\gamma_1$, $\gamma_2$ for the unique curves of the corresponding classes. If $P\in \gamma_1$ then $L_P=\gamma_1\cup\gamma_2$, and then it follows that generically $\widetilde{P}\in\gamma_2$, and vice versa. Now the key point is that equations \eqref{trick 1} can be put as
\begin{equation}\label{trick 2}
I_\beta(\gamma_1)=\beta-\gamma_1, \quad I_\beta(\gamma_2)=\beta-\gamma_2,
\end{equation}
or $\gamma_1,\gamma_2\in V_1$. Because of \eqref{gamma prop}, we see that \eqref{I beta on Pic} holds true for for $\gamma_1$, $\gamma_2$. 

Thus, the affine subspace $V_1$ in $V$ contains $E_j$, $\beta-E_j$ with $\langle \beta,E_j\rangle=0$ and $\gamma_1$, $\gamma_2$ from the splittings \eqref{beta split}, and for all these elements equation \eqref{I beta on Pic} is satisfied. To finish the proof, it suffices to show that these elements span a space of dimension $\ge 8$.  By a simple case-by-case check we see that the elements $E_j$ with $\langle \beta,E_j\rangle=0$ can be always complemented by suitable elements $\gamma_1$ to eight linearly independent ones, see Appendix \ref{App} for details. \qed 
\medskip

{\bf Remark.} Formula \eqref{I beta on Pic} implies that the affine space $V_1$ admits an alternative description as
\begin{equation}\label{V1 alt}
V_1=\big\{\lambda\in V: \langle \beta,\lambda\rangle=0, \; \langle \delta,\lambda\rangle=1\big\}.
\end{equation}
Differences of elements of $V_1$ form the vector space
\begin{equation}\label{eigenspace -1}
V_0=\big\{\lambda\in V:  I_\beta(\lambda)=-\lambda\big\}
\end{equation}
(the eigenspace of $I_\beta$ for the eigenvalue $-1$), and, according to formula \eqref{I beta on Pic}, this space admits an alternative description
\begin{equation}\label{V0}
V_0=\big\{\lambda\in V: \langle \beta,\lambda\rangle=\langle \delta,\lambda\rangle=0\big\}=\big({\rm span}(\beta,\delta)\big)^\perp.
\end{equation}
Therefore, $\dim V_0=8$. One can find a basis of $V_0$ consisting of eight roots from $R$, see Appendix \ref{App}. 

\subsection{Proof for scenario 3}

We obtain the result for the scenario 3 without further computations, based on the following observation.
\begin{proposition}\label{prop 2D to 3D}
For any $\beta\in B$ in the scenario 3, the corresponding involution $\cI_\beta$ induces on every non-degenerate quadric $Q$ of the net through $P_1,\ldots,P_8$ the involution $\cI_{\widetilde\beta}$ of the scenario 2, where $Q$ is interpreted as $\mathbb P^1\times\mathbb P^1$ via the Segre embedding, and the element $\widetilde\beta$ is formally obtained from $\beta$ by replacing $\Pi$ through $H_1+H_2$. 
\end{proposition}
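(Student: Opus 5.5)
Fix a non-degenerate quadric $Q$ of the net, so $Q\supset\{P_1,\ldots,P_8\}$, and identify $Q\cong\mathbb P^1\times\mathbb P^1$ via the Segre embedding. The plan is to check directly that the definition of $\cI_\beta$ from section \ref{sec: involutions}, restricted to $Q$, is word for word the definition of $\cI_{\widetilde\beta}$ in scenario 2. Three facts are needed: $Q$ is invariant under $\cI_\beta$; the degree-$4$ base curve $C_P$ for $P\in Q$ is the biquadratic curve of the pencil on $Q$ through $P$; and the surface $L_P$ meets $Q$ in the unique curve of class $\widetilde\beta$ through $P$.

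\emph{Step 1 (invariance of $Q$ and the curves $C_P$).} Take a generic $P\in Q$. The quadrics of the net through $P$ form a pencil containing $Q$, so $C_P=Q\cap Q'$ for any other member $Q'$. On $Q$, a plane section has class $H_1+H_2$, hence $Q\cap Q'$ is a $(2,2)$-curve. As $Q'$ runs over the net, these curves form the pencil of biquadratic curves on $Q$ through $P_1,\ldots,P_8$. The point $\widetilde P$ lies on $C_P\subset Q$, so $\cI_\beta(Q)\subset Q$. Moreover, $C_P$ is exactly the curve of the scenario-2 pencil through $P$, i.e.\ the curve of class $\delta=2H_1+2H_2-\sum E_i$ on $Bl(Q)$.

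\emph{Step 2 (restriction of $L_P$).} Write $\beta=d\Pi-\sum m_iE_i$. A surface of degree $d$ with multiplicity $m_i$ at $P_i$ restricts to $Q$ as a curve of bidegree $(d,d)$ with multiplicity at least $m_i$ at $P_i$. Its class on $Bl(Q)$ is therefore $d(H_1+H_2)-\sum m_iE_i=\widetilde\beta$, provided the multiplicities are exactly $m_i$ and $Q\not\subset L_P$. We then check that $\widetilde\beta$ lies in the set $B$ of scenario 2. This is formal: substituting $\Pi\mapsto H_1+H_2$ sends the scenario-3 product $\langle\Pi,\Pi\rangle=2$ to $\langle H_1+H_2,H_1+H_2\rangle=2$ and preserves the $E_i$. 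Hence $\langle\widetilde\beta,\widetilde\beta\rangle=0$ and $\langle\widetilde\beta,\delta\rangle=2$, so there is a unique curve of class $\widetilde\beta$ through $P$ on $Q$, and $L_P\cap Q$ must be that curve. Finally, $\widetilde P$ is the residual intersection point of $L_P$ with $C_P$. Since $C_P\subset Q$, this equals the residual point of $(L_P\cap Q)$ with $C_P$ on $Q$, which is $\cI_{\widetilde\beta}(P)$ by definition. The intersection counts agree: $\delta\cdot\delta\cdot\beta=2=\langle\widetilde\beta,\delta\rangle$ in scenario 2.

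\emph{Main obstacle.} The delicate point is genericity in Step 2: we must show $Q\not\subset L_P$ and that the restricted curve has exactly the prescribed multiplicities, with no extra fixed components. Degree considerations handle most of this. For $d\le 2$, containment would force $L_P=Q$, but a cone with tip $P_m$, or a plane, is not $Q$. For the Cayley cubic, $L_P\supset Q$ would force $L_P=Q\cup\text{plane}$, which cannot have four double points among $P_i,P_j,P_k,P_\ell$ in general position. Alternatively, uniqueness of the member of the one-dimensional system already identifies $L_P\cap Q$ once it is known to lie in class $\widetilde\beta$. If the multiplicity is higher, the class differs from $\widetilde\beta$ by effective exceptional parts. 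Here I would argue by the count $\langle\cdot,\delta\rangle$ that no such curve passes through a generic $P$, since the residual intersection with $C_P$ would become empty.
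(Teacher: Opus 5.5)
Your proposal is correct and follows the paper's argument. The other members of the net cut out the biquadratic pencil on $Q$, a degree-$d$ surface of class $\beta$ meets $Q$ in a $(d,d)$-curve of class $\widetilde\beta$, and $\Pi\mapsto H_1+H_2$ is an isometry onto $\mathbb Z(H_1+H_2)\oplus\mathbb Z E_1\oplus\cdots\oplus\mathbb Z E_8$, so the two recipes for $\widetilde P$ coincide. You go further than the paper by making explicit the invariance of $Q$, the residual-point identification, and the genericity issues, which the paper does not discuss; your treatment of excess multiplicity at the $P_i$ is only sketched, since "the residual becomes empty" is not by itself a contradiction.
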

\begin{proof}
Each non-degenerate quadric $Q$ of the net through $P_1,\ldots,P_8$ in the scenario 3 can be interpreted (via the Segre embedding) as $\mathbb P^1\times \mathbb P^1$ with eight points $P_1,\ldots,P_8$ which support a pencil of biquadratic curves (intersections of $Q$ with other members of the net). Exceptional divisors $E_1,\ldots, E_8$ of $X=Bl_{P_1,\ldots,P_8}(\mathbb P^3)$ are restricted to the exceptional divisors (denoted by the same letters) of $Bl_{P_1,\ldots,P_8}(Q)\simeq Bl_{P_1,\ldots,P_8}(\mathbb P^1\times \mathbb P^1)$. Intersections of $Q$ with the surfaces of the class $\beta$ in the scenario 3 are nothing but curves of the corresponding class $\widetilde\beta$ in the scenario 2. Indeed, each surface $S$ of the class 
$$
\beta=m\Pi-\sum m_iE_i
$$ 
is of degree $m$. It intersects $Q$ along a space curve $L$ of degree $2m$. This curve intersects each generator of $Q$ at exactly $m$ points, therefore, in the interpretation of $Q$ as $\mathbb P^1\times\mathbb P^1$  the curve $L$ is a $(m,m)$-curve of the class
$$
\widetilde\beta=m(H_1+H_2)-\sum m_iE_i.
$$ 
Finally, we notice that the sublattice 
$$
 \bbZ (H_1+H_2) \oplus \bbZ E_1 \oplus \ldots \oplus \bbZ E_8
$$
of the Picard group in the scenario 2 is isometric to the Picard group in the scenario 3, due to 
$$
\langle H_1+H_2, H_1+H_2\rangle = 2=\langle \Pi,\Pi\rangle
$$
(with the different meanings of the scalar product symbol on the both sides).
\end{proof}

\section{Geometric interpretation of translations in the affine Weyl group}
\label{sect translations}

Recall the Kac formula \cite{Kac} for the action of the translation $T_\alpha$ on $\text{Pic}(X)$, corresponding to $\alpha\in Q$:
\begin{equation}\label{Kac}
	T_\alpha (\lambda) = \lambda + \langle \delta, \lambda \rangle \alpha - \left( \dfrac{1}{2} \la\delta, \lambda \ra \la \alpha, \alpha \ra + \la \alpha, \lambda \ra\right) \delta.
\end{equation}

\begin{theorem}
For any $\beta_1,\beta_2\in B$, we have:
\begin{equation}
I_{\beta_1}\circ I_{\beta_2}=T_{\beta_1-\beta_2}.
\end{equation} 
Thus, the birational map $\cI_{\beta_1}\cdot\cI_{\beta_2}$ induces the translation $T_{\beta_1-\beta_2}$ on the Picard group.
\end{theorem}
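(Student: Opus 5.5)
We will verify the identity directly on the level of linear maps of ${\rm Pic}(X)$ (or of $V$), using Theorem \ref{th I beta on Pic} for both factors and the Kac formula \eqref{Kac} for the right-hand side. No further geometry is needed: once both sides are shown to be the same linear map, the statement about the birational map $\cI_{\beta_1}\cdot\cI_{\beta_2}$ follows, because $(\cI_{\beta_1}\circ\cI_{\beta_2})_*=I_{\beta_1}\circ I_{\beta_2}$. Here we use that push-forwards of birational maps of $X$ compose correctly; this holds since each $\cI_\beta$ is a pseudo-isomorphism in dimension $2$ (resp. acts as in Proposition \ref{prop 2D to 3D}), so that its action is the one computed in Theorem \ref{th I beta on Pic}.

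\textbf{Step 1: the composition.} Set $\mu=I_{\beta_2}(\lambda)=-\lambda+\langle\lambda,\beta_2\rangle\delta+\langle\lambda,\delta\rangle\beta_2$. We first record the scalar products of $\mu$ needed to apply $I_{\beta_1}$. Using $\langle\delta,\delta\rangle=0$, $\langle\beta_2,\delta\rangle=2$ and $\langle\beta_2,\beta_2\rangle=0$, we get
\begin{equation*}
\langle\mu,\delta\rangle=\langle\lambda,\delta\rangle, \qquad
\langle\mu,\beta_1\rangle=-\langle\lambda,\beta_1\rangle+2\langle\lambda,\beta_2\rangle+\langle\lambda,\delta\rangle\langle\beta_1,\beta_2\rangle .
\end{equation*}
Substituting into $I_{\beta_1}(\mu)=-\mu+\langle\mu,\beta_1\rangle\delta+\langle\mu,\delta\rangle\beta_1$ gives
\begin{equation*}
I_{\beta_1}I_{\beta_2}(\lambda)=\lambda+\langle\lambda,\delta\rangle(\beta_1-\beta_2)-\big(\langle\lambda,\beta_1\rangle-\langle\lambda,\beta_2\rangle-\langle\lambda,\delta\rangle\langle\beta_1,\beta_2\rangle\big)\delta .
\end{equation*}

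\textbf{Step 2: comparison with the Kac formula.} Put $\alpha=\beta_1-\beta_2$. Since $\langle\alpha,\delta\rangle=2-2=0$, we have $\alpha\in Q$, so $T_\alpha$ is defined. Moreover $\langle\alpha,\alpha\rangle=0-2\langle\beta_1,\beta_2\rangle+0=-2\langle\beta_1,\beta_2\rangle$, hence $\tfrac12\langle\alpha,\alpha\rangle=-\langle\beta_1,\beta_2\rangle$. Inserting these into \eqref{Kac} reproduces exactly the coefficient of $\delta$ obtained in Step 1, so $I_{\beta_1}\circ I_{\beta_2}=T_{\beta_1-\beta_2}$.

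\textbf{Main obstacle.} The computation itself is routine bookkeeping. The only delicate point is justifying that the push-forward of the composed birational map equals the composition of the push-forwards. This requires the involutions not to contract divisors into base loci of each other in a way that breaks functoriality. We would argue this from the fact that every $\cI_\beta$ preserves the fibration by $\delta$ (Scenarios 1, 2) or the pencils $C_P$ (Scenario 3), together with the explicit blow-up/blow-down description used in the proof of Theorem \ref{th I beta on Pic}. In Scenario 3 we would reduce to Scenario 2 via Proposition \ref{prop 2D to 3D}.
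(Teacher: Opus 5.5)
Your proposal is correct and follows the paper's proof: you apply formula \eqref{I beta on Pic} twice and match the $\delta$-coefficient with the Kac formula via $\tfrac12\langle\beta_1-\beta_2,\beta_1-\beta_2\rangle=-\langle\beta_1,\beta_2\rangle$. The only difference is bookkeeping: the paper applies $I_{\beta_1}$ linearly to $\delta$ and $\beta_2$ instead of computing $\langle\mu,\delta\rangle$ and $\langle\mu,\beta_1\rangle$. The paper takes for granted the functoriality of push-forward under composition that you flag as the main obstacle, so your remarks there go beyond what the paper provides rather than fill a gap in it.
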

\begin{proof}
We compute: 
$$
I_{\beta_1}(\delta)=\delta, \quad  I_{\beta_1}(\beta_2)=-\beta_2+\langle \beta_1,\beta_2\rangle \delta+2\beta_1, 
$$ 
therefore
\begin{eqnarray*}
I_{\beta_1}\circ I_{\beta_2}(\lambda) & = & I_{\beta_1}\big(-\lambda+\langle \lambda,\beta_2\rangle \delta + \langle \lambda,\delta\rangle \beta_2\big)\\
& = & \lambda-\langle \lambda,\beta_1\rangle \delta - \langle \lambda,\delta\rangle \beta_1\\
&   & +\langle \lambda,\beta_2\rangle \delta +\langle \lambda,\delta\rangle (-\beta_2+\langle \beta_1,\beta_2\rangle \delta+2\beta_1)\\
& = & \lambda +\langle \lambda,\delta\rangle (\beta_1-\beta_2)-\Big(\langle\lambda, \beta_1-\beta_2 \rangle -\langle \lambda,\delta\rangle\langle\beta_1,\beta_2\rangle \Big)\delta\\
& = & \lambda +\langle \lambda,\delta\rangle (\beta_1-\beta_2)-\Big(\langle\lambda, \beta_1-\beta_2 \rangle +\frac{1}{2}\langle \lambda,\delta\rangle\langle\beta_1-\beta_2,\beta_1-\beta_2\rangle \Big)\delta\\
& = & T_{\beta_1-\beta_2}(\lambda).
\end{eqnarray*}
This finishes the proof.
\end{proof}

We are mostly interested in the geometric representation $\cI_{\beta_1}\circ\cI_{\beta_2}$ of translations $T_\alpha$ with $\alpha\in R$. Below, we provide examples of splittings $\alpha=\beta_1-\beta_2$ of the roots $\alpha\in R$ (they are by no means unique).
\medskip

{\bf Scenario 1.}
\begin{center}
\begin{tabular}{|c|c|c|}
\hline
$\alpha$ & $\beta_1$ & $\beta_2$ \\
\hline\hline
$E_i-E_j$ & $D-E_j$ & $D-E_i$\\
\hline
$D-E_i-E_j-E_k$ & $2D-E_i-E_j-E_k-E_\ell$ & $D-E_\ell$\\
\hline
$D-E_i-E_j-E_k$ & $3D-2E_i-E_j-E_k-E_\ell-E_m-E_n$ & $2D-E_i-E_\ell-E_m-E_n$\\
\hline
\end{tabular}
\end{center}
\medskip

{\bf Scenario 2.}
\begin{center}
\begin{tabular}{|c|c|c|}
\hline
$\alpha$ & $\beta_1$ & $\beta_2$ \\
\hline\hline
$H_1-H_2$ & $H_1$ & $H_2$\\
\hline
$H_1-E_i-E_j$ & $H_1+H_2-E_i-E_j$ & $H_2$\\
\hline
$H_2-E_i-E_j$ & $H_1+H_2-E_i-E_j$ & $H_1$\\
\hline
$H_1+H_2-E_i-E_j-E_k-E_\ell$ & $2H_1+H_2-E_i-E_j-E_k-E_\ell$ & $H_1$\\
\hline
$H_1+H_2-E_i-E_j-E_k-E_\ell$ & $H_1+2H_2-E_i-E_j-E_k-E_\ell$ & $H_2$\\
\hline
$E_i-E_j$ & $H_1+H_2-E_j-E_k$ & $H_1+H_2-E_i-E_k$\\
\hline
$H_1+H_2-E_i-E_j-E_k-E_\ell$ & $2(H_1+H_2)-2E_i-E_j-E_k $ & $H_1+H_2-E_i-E_m$\\ & $-E_\ell-E_m$ & \\
\hline
$H_1+H_2-E_i-E_j-E_k-E_\ell$ & $3(H_1+H_2)-2E_i-2E_j$ & $2(H_1+H_2)-E_i-E_j$ \\
 & $-2E_k-2E_m-E_\ell-E_n$ & $-E_k-E_n-2E_m$ \\
\hline
\end{tabular}
\end{center}
\medskip

{\bf Scenario 3.}
\begin{center}
\begin{tabular}{|c|c|c|}
\hline
$\alpha$ & $\beta_1$ & $\beta_2$ \\
\hline\hline
$E_i-E_j$ & $\Pi-E_j-E_k$ & $\Pi-E_i-E_k$\\
\hline
$\Pi-E_i-E_j-E_k-E_\ell$ & $2\Pi-2E_i-E_j-E_k-E_\ell-E_m$ & $\Pi-E_i-E_m$\\
\hline
$\Pi-E_i-E_j-E_k-E_\ell$ & $3\Pi-2E_i-2E_j-2E_k-2E_m$ & $2\Pi-E_i-E_j-E_k-E_n-2E_m$ \\ & $-E_\ell-E_n$ & \\
\hline
\end{tabular}
\end{center}
\medskip

Thus, we have a systematic geometric construction for birational maps inducing translations $T_\alpha$ for roots $\alpha$ of the respective affine root systems $E_8^{(1)}$ and $E_7^{(1)}$.

\section{Conclusions}
\label{sect conclusions}

In this paper, we work out the general theory behind the findings of \cite{AS25}, and extend it to the three-dimensional integrable systems. In particular, we find geometric interpretation for integrable birational 3D maps related to special configurations of eight points in $\mathbb P^3$ \cite{T} (eight points supporting a net of quadrics). As we show, one can lift to 3D all translational maps of the corresponding QRT hierarchy corresponding to divisor classes $\beta\in B$ with $\langle \beta, H_1-H_2\rangle =0$. The latter restriction is essential. As we pointed out in \cite{ASW, ASW1, AS2}, the natural lifts to 3D of the standard QRT involutions corresponding to the divisor classes $\beta=H_1$, $\beta=H_2$ are not birational maps of $\mathbb P^3$. Rather, they branch over degenerate quadrics (cones) of the underlying net. This branching behaviour is quite natural: the corresponding constructions impose the assignment of generators of the quadrics to one of the classes $H_1$, $H_2$, and such an assignment breaks down (and branches) over the cones. This correlates with the fact that there are ``less'' birational involutions and translations in dimension 3 than in dimension 2 (the symmetry group $W(E_7^{(1)})$ is ``smaller'' than $W(E_8^{(1)})$). Elaborating on 3D analogs of geometric involutions with $\langle \beta, H_1-H_2\rangle \neq 0$ is left for the future work, the first step being the construction of 3D QRT involutions in \cite{ASW}. 
\smallskip

Some further natural directions of further investigations are:
\begin{itemize}
\item Extend constructions of the present work to the ``non-autonomous'' case, i.e., of generic point configurations (nine points in $\mathbb P^2$ supporting a unique cubic curve in the scenario 1, eight points in $\mathbb P^1\times\mathbb P^1$ supporting a unique biquadratic curve in the scenario 2, and eight points in $\mathbb P^3$ supporting a pencil of quadrics in the scenario 3). The first two scenarios correspond to elliptic Painlev\'e equations, and the translations for $\alpha=E_i-E_j$ were given a geometric interpretation in \cite{KMNOY, KMNOY1}. Geometric interpretation for other roots remains open. Likewise, the geometry of the non-autonomous maps from \cite{T} is not yet understood.

\item Specialize constructions of the present work to even more special point configurations, leading to discrete Painlev\'e equations with smaller symmetry groups and to their 3D counterparts (the latter remain largely unknown, compare \cite{T}). 

\item Combine constructions of the present work with {\em Painlev\'e deformations maps} from \cite{ASW1, AS2}, to achieve an alternative understanding of discrete Painlev\'e equations.

\item Integrability of our maps refers also to further geometric attributes, like invariant measures. This aspect has been dealt with in \cite{ASvolume} for 3D QRT maps, and this investigation should be extended to the novel maps introduced in the present work.

\end{itemize}
All this will be a subject of further work.



\begin{appendix}

\section{Formulas for the action of geometric involutions on the Picard group}
\label{App}

\subsection{Scenario 1}

\subsubsection{\boldmath$\beta=D-E_i$} 

Here, formula \eqref{I beta on Pic} has been verified for $\beta$, $\delta$ and for $E_j$ with $j\neq i$ (eight in total). These ten elements are a basis of $V$, which finishes the proof of \eqref{I beta on Pic} in the present case. This allows us to compute the matrix of $I_\beta$ in the standard basis of ${\rm Pic}(X)$:
\begin{alignat}{2}
I_{\beta}(D)  & \; =-D+3\beta+\delta & & = 5D-4E_i-\sum_{k\neq i} E_k,\label{Ii 0}\\
I_{\beta}(E_i)  & \; =-E_i+\beta+\delta&& =4D-3E_i-\sum_{k\neq i}E_k, \label{Ii i}\\
I_{\beta}(E_j)  & \;=-E_j+\beta && =D -E_i-E_j, \quad j\neq i. \label{Ii j}
\end{alignat}
This action of $I_\beta$ on ${\rm Pic}(X)$ was computed in \cite{Manin}.

\subsubsection{\boldmath$\beta=2D-E_1-E_2-E_3-E_4$}

We fix the indices for a better readability since there is, of course, no loss of generality. In this case, the space $V_1$ contains $E_j$ with $j=5,\ldots,9$ (five in total). Further elements of $V_1$ are given by the following decompositions of the kind \eqref{beta split}:
\begin{equation}\label{I1234 split}
\gamma_1=D-E_i-E_k, \quad \gamma_2=D-E_j-E_\ell,  \quad \{i,j,k,\ell\}=\{1,2,3,4\},
\end{equation}
and elements $\gamma_1$ with $(i,k)=(1,2), (1,3), (1,4)$ complement $E_j$ to eight linearly independent ones. This proves Theorem \ref{th I beta on Pic} in the present case. 
Vector space $V_0$ contains the following roots:
\begin{eqnarray}
\lambda & = & E_i-E_j, \quad i,j\in\{1,2,3,4\}, \label{I1234 V0 1}\\
\lambda & = & D-E_i-E_j-E_k, \quad i,k\in\{1,2,3,4\}, \;\; j\in\{5,\ldots,9\}. \label{I1234 V0 2}
\end{eqnarray}
Eight linearly independent ones are given, e.g., by \eqref{I1234 V0 1} with $i=1$ and $j=2,3,4$ and by \eqref{I1234 V0 2} with $i=1$, $k=2$ and $j=5,\ldots,9$. 

We compute $I_\beta$ in the standard basis of ${\rm Pic}(X)$:
\begin{alignat}{2}
I_\beta(D) & \;= -D+ 3\beta+2\delta &&\;=11D-5(E_1+\ldots+E_4)-2(E_5+\ldots+E_9), \label{I1234 0}\\
I_\beta(E_j) & \;= -E_j+\beta+\delta && \;=5D-2(E_1+\ldots+E_4)-E_j-(E_5+\ldots+E_9), \nonumber\\
&&& \hspace{8cm}\;\;j=1,\ldots,4,\label{I1234 1..4}\\
I_\beta(E_j) & \; = -E_j+\beta && \;=2D-(E_1+\ldots+E_4)-E_j,\;\;j=5,\ldots,9.\label{I1234 5..9}
\end{alignat}

\subsubsection{\boldmath$\beta=3D-2E_1-E_2-E_3-E_4-E_5-E_6$.} 

The space $V_1$ contains  $E_j$ for $j=7,8,9$, as well as $\gamma_1$, $\gamma_2$ from decompositions of the kind \eqref{beta split}:
\begin{equation}\label{Icubic split}
\gamma_1=D-E_1-E_i, \quad \gamma_2=2D-E_1-E_j-E_k-E_\ell-E_m,  
\end{equation}
where $i=2,\ldots,6$ and $\{j,k,\ell,m\}=\{2,3,4,5,6\}\setminus \{i\}$. This gives us eight linearly independent elements $E_j$ and $\gamma_1$ and proves  Theorem \ref{th I beta on Pic} in the present case. Vector space $V_0$ contains the following roots: 
\begin{eqnarray}
\lambda & = & E_i-E_j, \quad i,j\in\{2,3,4,5,6\}, \label{Icubic V0 1}\\
\lambda & = & D-E_i-E_j-E_k, \quad i,j,k\in\{2,3,4,5,6\}, \label{Icubic V0 2}\\
\lambda & = & D-E_1-E_i-E_j, \quad i\in\{2,3,4,5,6\}, \; j\in\{7,8,9\}. \label{Icubic V0 3}
\end{eqnarray}
Eight linearly independent ones are given, e.g., by \eqref{Icubic V0 1} with $i=2$ and $j=3,\ldots,6$,  \eqref{Icubic V0 2} with 
$\{i,j,k\}=\{2,3,4\}$, and  \eqref{Icubic V0 3} with $i=2$ and $j=7,8,9$.

We compute $I_\beta$ in the standard basis of ${\rm Pic}(X)$:
\begin{alignat}{2}
I_\beta(D) & \;= -D+ 3\beta+3\delta &&\;=17D-9E_1-6(E_2+\ldots+E_6)-3(E_7+E_8+E_9), \label{Icubic 0}\\
I_\beta(E_1) & \;= -E_1+ \beta+2\delta &&\;=9D-5E_1-3(E_2+\ldots+E_6)-2(E_7+E_8+E_9), \label{Icubic 1}\\
I_\beta(E_j) & \;= -E_j+\beta+\delta && \;=6D-3E_1-2(E_2+\ldots+E_6)-E_j-(E_7+E_8+E_9), \nonumber\\
&&& \hspace{8cm}j=2,\ldots,6,\label{Icubic 2..6}\\
I_\beta(E_j) & \; = -E_j+\beta && \;=3D-2E_1-(E_2+\ldots+E_6)-E_j,\;\;j=7,8,9.\label{Icubic 7..9}
\end{alignat}

\subsection{Scenario 2}

\subsubsection{\boldmath$\beta=H_1$ or \boldmath $\beta=H_2$}

If $\beta=H_1$, then we have, along with $I_\beta(\beta)=\beta$ and $I_\beta(\delta)=\delta$, eight relations $I_\beta(E_j)=\beta-E_j$. All ten relations are instances of \eqref{I beta on Pic}, which suffices to prove \eqref{I beta on Pic}. We compute $I_\beta$ in the standard basis of ${\rm Pic}(X)$: 
\begin{alignat}{2}
I_\beta(H_1) & \;=-H_1+2\beta &&\;=H_1,\\
I_\beta(H_2) & \; =-H_2+2\beta+\delta && \; =4H_1+H_2-(E_1+\ldots+E_8), \\
I_\beta(E_j) & \; =-E_j+\beta && \;=H_1-E_j, \;\; j=1,\ldots,8.
\end{alignat}
Similarly, if $\beta=H_2$, we have:
\begin{alignat}{2}
I_\beta(H_1) & \;=-H_1+2\beta+\delta &&\;=H_1+4H_2-(E_1+\ldots+E_8),\\
I_\beta(H_2) & \; =-H_2+2\beta && \; =H_2, \\
I_\beta(E_j) & \; =-E_j+\beta && \;=H_2-E_j, \;\; j=1,\ldots,8.
\end{alignat}

\subsubsection{\boldmath$\beta=H_1+H_2-E_1-E_2$}

The space $V_1$ contains  $E_j$ for $j=3,\ldots,8$ (six in total), as well as $\gamma_1$, $\gamma_2$ from decompositions of the kind \eqref{beta split}:
\begin{equation}
\gamma_1=H_1-E_i, \quad \gamma_2= H_2-E_j, \quad \{i,j\}=\{1,2\}.
\end{equation}
Any two of the latter can be taken as the 7-th and the 8-th linearly independent elements. This proves Theorem \ref{th I beta on Pic} for the present case. We compute $I_\beta$ in the standard basis of ${\rm Pic}(X)$:
\begin{alignat}{2}
I_\beta(H_1) & \;= -H_1+2\beta+\delta && \;=3H_1+4H_2-3(E_1+E_2)-(E_3+\ldots+E_8), \label{I12 2D H1}\\
I_\beta(H_2) & \;= -H_2+2\beta+\delta && \; =4H_1+3H_2-3(E_1+E_2)-(E_3+\ldots+E_8), \label{I12 2D H2}\\
I_\beta(E_1) & \;= -E_1+\beta+\delta && \; =3(H_1+H_2)-3E_1-2E_2-(E_3+\ldots+E_8), \label{I12 2D E1}\\
I_\beta(E_2) & \;= -E_2+\beta+\delta && \;=3(H_1+H_2)-2E_1-3E_2-(E_3+\ldots+E_8), \label{I12 2D E2}\\
I_\beta(E_j) & \;= -E_j+\beta && \;= H_1+H_2-E_1-E_2-E_j, \;\; j=3,\ldots,8. \label{I12 2D Ej}
\end{alignat}

\subsubsection{\boldmath$\beta=2H_1+H_2-E_1-E_2-E_3-E_4\;$ or \boldmath$\;\beta=H_1+2H_2-E_1-E_2-E_3-E_4$}

For definiteness, we consider the case $\beta=2H_1+H_2-E_1-E_2-E_3-E_4$ (the second one is obtained by interchanging the roles of $H_1$ and $H_2$). The space $V_1$ contains  $E_j$, $\beta-E_j$ for $j=5,6,7,8$, as well as $\gamma_1$, $\gamma_2$ from decompositions of the kind \eqref{beta split}:
\begin{equation}
\gamma_1=H_1-E_\ell, \quad \gamma_2= H_1+H_2-E_i-E_j-E_k,\quad \{i,j,k,\ell\}=\{1,2,3,4\}.
\end{equation}
Eight elements $E_j$, $j=5,6,7,8$ and $H_1-E_\ell$, $\ell=1,2,3,4$ are linearly independent. This proves Theorem \ref{th I beta on Pic} for the present case. We compute $I_\beta$ in the standard basis of ${\rm Pic}(X)$:
\begin{alignat}{2}
I_\beta(H_1) & \;= -H_1+2\beta+\delta && \;=5H_1+4H_2-3(E_1+\ldots+E_4)-(E_5+\ldots+E_8), \label{I2-1 H1}\\
I_\beta(H_2) & \;= -H_2+2\beta+2\delta && \; =8H_1+5H_2-4(E_1+\ldots+E_4)-2(E_5+\ldots+E_8), \label{I2-1 H2}\\
I_\beta(E_j) & \;= -E_j+\beta+\delta && \; =4H_1+3H_2-2(E_1+\ldots+E_4)-E_j-(E_5+\ldots+E_8), \nonumber\\
&&& \hspace{8cm} j=1,\ldots,4, \label{I2-1 E1}\\
I_\beta(E_j) & \;= -E_j+\beta && \;= 2H_1+H_2-(E_1+\ldots+E_4)-E_j, \;\; j=5,\ldots,8. \label{I2-1 E5}
\end{alignat}

\subsubsection{\boldmath$\beta=2(H_1+H_2)-2E_1-E_2-E_3-E_4-E_5$}

The space $V_1$ contains  $E_j$, $\beta-E_j$ for $j=6,7,8$, as well as $\gamma_1$, $\gamma_2$ from decompositions of the kind \eqref{beta split}:
\begin{eqnarray}
&&\gamma_1=H_1-E_1, \quad \gamma_2= H_1+2H_2-E_1-E_2-E_3-E_4-E_5,  \label{I2-2 gamma1 1}\\
&&\gamma_1=H_2-E_1, \quad \gamma_2= 2H_1+H_2-E_1-E_2-E_3-E_4-E_5,  \label{I2-2 gamma1 2}\\
&&\gamma_1=H_1+H_2-E_1-E_i-E_j, \quad \gamma_2= H_1+H_2-E_1-E_k-E_\ell, \nonumber\\
&& \hspace{8.5cm} \{i,j,k,\ell\}=\{2,3,4,5\}.  \label{I2-2 gamma1 3}
\end{eqnarray}
Elements $E_j$ and $\gamma_1$ from \eqref{I2-2 gamma1 1}, \eqref{I2-2 gamma1 2} and \eqref{I2-2 gamma1 3} with $(i,j)=(2,3), (2,4), (2,5)$ are linearly independent. This proves Theorem \ref{th I beta on Pic} for the present case. 
The space $V_0$ of differences of elements of $V_1$ contains, in particular, the following roots: 
\begin{eqnarray}
\lambda & = & H_1-H_2, \label{I2 2D H1-H2}\\
\lambda & = & E_i-E_j, \quad i,j\in\{6,7,8\}\;\;{\rm or}\;\; i,j\in\{2,3,4,5\}, \label{I2 2D Ei-Ej}\\
\lambda & = & H_1-E_1-E_j, \quad j\in\{6,7,8\}, \label{I2 2D H1-E1-Ej}\\
\lambda & = & H_1-E_i-E_j, \quad i,j\in\{2,3,4,5\}. \label{I2 2D H1-Ei-Ej}
\end{eqnarray}
Eight linearly independent ones among them, are given by, e.g., \eqref{I2 2D H1-H2}, \eqref{I2 2D Ei-Ej} with $i=6$ and $j=7,8$, as well as  with $i=2$ and $j=3,4,5$, \eqref{I2 2D H1-E1-Ej} with $j=6$, and \eqref{I2 2D H1-Ei-Ej} with $i=2$, $j=3$. 

We compute $I_\beta$ in the standard basis of ${\rm Pic}(X)$:
\begin{alignat}{2}
I_\beta(H_1) & \;= -H_1+2\beta+2\delta && \;=7H_1+8H_2-6E_1-4(E_2+\ldots+E_5)-2(E_6+E_7+E_8), \label{I2 2D H1}\\
I_\beta(H_2) & \;= -H_2+2\beta+2\delta && \; =8H_1+7H_2-6E_1-4(E_2+\ldots+E_5)-2(E_6+E_7+E_8), \label{I2 2D H2}\\
I_\beta(E_1) & \;= -E_1+\beta+2\delta && \; =6(H_1+H_2)-5E_1-3(E_2+\ldots+E_5)-2(E_6+E_7+E_8), \label{I2 2D E1}\\
I_\beta(E_j) & \;= -E_j+\beta+\delta && \;=4(H_1+H_2)-3E_1-2(E_2+\ldots+E_5)-E_j-(E_6+E_7+E_8), \nonumber\\
&&&
\qquad\qquad\qquad\qquad\qquad\qquad\qquad\qquad\qquad j=2,\ldots,5, \label{I2 2D E2345}\\
I_\beta(E_j) & \;= -E_j+\beta && \;= 2(H_1+H_2)-2E_1-(E_2+\ldots+E_5)-E_j, \;\; j=6,7,8. \label{I2 2D E678}
\end{alignat}

\subsubsection{\boldmath$\beta=3(H_1+H_2)-2E_1-2E_2-2E_3-2E_4-E_5-E_6$}

The space $V_1$ contains  $E_j$, $\beta-E_j$ for $j=7,8$, as well as $\gamma_1$, $\gamma_2$ from decompositions of the kind \eqref{beta split}:
$$
\gamma_1=H_1+2H_2-E_1-\ldots-E_4-E_j, \quad \gamma_2= 2H_1+H_2-E_1-\ldots-E_4-E_k,
$$
where $\{j,k\}=\{5,6\}$, and 
$$
\gamma_1=H_1+H_2-E_j-E_k-E_\ell, \quad \gamma_2= 2(H_1+H_2)-2E_i-E_j-E_k-E_\ell-E_5-E_6,
$$
where $\{i,j,k,\ell\}=\{1,2,3,4\}$. We have eight linearly independent elements $E_j$ and $\gamma_1$. This proves Theorem \ref{th I beta on Pic} for the present case. 
The space $V_0$ of differences of elements of $V_1$ contains, in particular, the following roots: 
\begin{eqnarray}
\lambda & = & H_1-H_2, \label{I3 2D H1-H2}\\
\lambda & = & E_5-E_6,  \;\; E_7-E_8, \label{I3 2D E5-E6}\\
\lambda & = & E_i-E_j, \quad i,j\in\{1,2,3,4\}, \label{I3 2D Ei-Ej}\\
\lambda & = & H_1-E_i-E_k, \quad i\in\{1,2,3,4\}, \;\;k\in\{5,6\}. \label{I3 2D H1-Ei-Ej}
\end{eqnarray}
One easily identifies eight linearly independent ones among them, say, \eqref{I3 2D H1-H2}, \eqref{I3 2D E5-E6}, \eqref{I3 2D Ei-Ej} with $i=1$ and $j=2,3,4$, and \eqref{I3 2D H1-Ei-Ej} with $i=1$, $j=5,6$. 

We compute $I_\beta$ in the standard basis of ${\rm Pic}(X)$:
\begin{alignat}{2}
I_\beta(H_1) & \;= -H_1+2\beta+3\delta && \;=11H_1+12H_2-7(E_1+\ldots+E_4)-5(E_5+E_6)-3(E_7+E_8), \label{I3 2D H1}\\
I_\beta(H_2) & \;= -H_2+2\beta+3\delta && \;=12H_1+11H_2-7(E_1+\ldots+E_4)-5(E_5+E_6)-3(E_7+E_8), \label{I3 2D H2}\\
I_\beta(E_j) & \;= -E_j+\beta+2\delta && \;=7(H_1+H_2)-4(E_1+\ldots+E_4)-E_j-3(E_5+E_6)-2(E_7+E_8), \nonumber\\ 
&&& \hspace{8.2cm} j=1,\ldots,4, \label{I3 2D E1234}\\
I_\beta(E_j) & \;= -E_j+\beta+\delta && \;= 5(H_1+H_2)-3(E_1+\ldots+E_4)-2(E_5+E_6)-E_j-(E_7+E_8),  \nonumber\\
&&& \hspace{9cm} j=5,6, \label{I3 2D E56}\\
I_\beta(E_j) & \;= -E_j+\beta && \;=3(H_1+H_2)-2(E_1+\ldots+E_4)-(E_5+E_6)-E_j,\;\; j=7,8. \label{I3 2D E78}
\end{alignat}

\subsection{Scenario 3}

\subsubsection{\boldmath$\beta=\Pi-E_1-E_2$}

We apply Proposition \ref{prop 2D to 3D} to transfer formulas \eqref{I12 2D H1}--\eqref{I12 2D Ej} for $\widetilde\beta=H_1+H_2-E_1-E_2$ in scenario 2 to the present context and obtain the following formulas for the action of $I_\beta$ on ${\rm Pic}(X)$:
\begin{alignat}{2}
I_\beta(\Pi) & \;= -\Pi+4\beta+2\delta && \;=7\Pi-6(E_1+E_2)-2(E_3+\ldots+E_8), \label{I12 3D Pi}\\
I_\beta(E_1) & \;= -E_1+\beta+\delta && \; =3\Pi-3E_1-2E_2-(E_3+\ldots+E_8), \label{I12 3D E1}\\
I_\beta(E_2) & \;= -E_2+\beta+\delta && \;=3\Pi-2E_1-3E_2-(E_3+\ldots+E_8), \label{I12 3D E2}\\
I_\beta(E_j) & \;= -E_j+\beta && \;= \Pi-E_1-E_2-E_j, \;\; j=3,\ldots,8. \label{I12 3D Ej}
\end{alignat}

\subsubsection{\boldmath$\beta=2\Pi-2E_1-E_2-E_3-E_4-E_5$}

Applying Proposition \ref{prop 2D to 3D}  to \eqref{I2 2D H1}--\eqref{I2 2D E678}, we find:
\begin{alignat}{2}
I_\beta(\Pi) & \;= -\Pi+4\beta+4\delta && \;=15\Pi-12E_1-8(E_2+E_3+E_4+E_5)-4(E_6+E_7+E_8), \label{I2 3D Pi}\\
I_\beta(E_1) & \;= -E_1+\beta+2\delta && \; =6\Pi-5E_1-3(E_2+E_3+E_4+E_5)-2(E_6+E_7+E_8), \label{I2 3D E1}\\
I_\beta(E_j) & \;= -E_j+\beta+\delta && \;=4\Pi-3E_1-2(E_2+E_3+E_4+E_5)-E_j-(E_6+E_7+E_8), \nonumber\\
&&&
\qquad\qquad\qquad\qquad\qquad\qquad\qquad\qquad\qquad j=2,\ldots,5, \label{I2 3D E2345}\\
I_\beta(E_j) & \;= -E_j+\beta && \;= 2\Pi-2E_1-(E_2+E_3+E_4+E_5)-E_j, \;\; j=6,7,8. \label{I2 3D E678}
\end{alignat}

\subsubsection{\boldmath$\beta=3\Pi-2E_1-2E_2-2E_3-2E_4-E_5-E_6$}

%
Applying Proposition \ref{prop 2D to 3D}  to \eqref{I3 2D H1}--\eqref{I3 2D E78}, we find:
\begin{alignat}{2}
I_\beta(\Pi) & \;= -\Pi+4\beta+6\delta && \;=23\Pi-14(E_1+\ldots+E_4)-10(E_5+E_6)-6(E_7+E_8), \label{I3 3D Pi}\\
I_\beta(E_j) & \;= -E_j+\beta+2\delta && \;=7\Pi-4(E_1+\ldots+E_4)-E_j-3(E_5+E_6)-2(E_7+E_8), \nonumber\\
&&& \hspace{8.1cm} \;\; j=1,\ldots,4, \label{I3 3D E1234}\\
I_\beta(E_j) & \;= -E_j+\beta+\delta && \;= 5\Pi-3(E_1+\ldots+E_4)-2(E_5+E_6)-E_j-(E_7+E_8), \nonumber\\
&&& \hspace{8.1cm}  \;\; j=5,6, \label{I3 3D E56}\\
I_\beta(E_j) & \;= -E_j+\beta && \;=3\Pi-2(E_1+\ldots+E_4)-(E_5+E_6)-E_j,\;\; j=7,8. \label{I3 3D E78}
\end{alignat}

\end{appendix}

\end{document}